\newtheorem{theorem}{Theorem}[section]
\newtheorem{lemma}{lemma}
\theoremstyle{definition}
\newtheorem{definition}{Definition}
\theoremstyle{remark}
\newtheorem{remark}{Remark} 
\newcommand{\opt}[1]{{#1}^\ast}
\newcommand{\sgn}[1]{\mathrm{sgn}\br{{#1}}}
\DeclareMathOperator*{\maxi}{\text{Maximize}}
\newcommand{\norm}[1]{\left\lVert {#1} \right\rVert}
\newcommand{\tran}[1]{{#1}^T}
\newcommand{\inner}[2]{\left\langle {#1},{#2} \right\rangle}
\newcommand{\tranb}[1]{{\left({#1}\right)}^T}
\newcommand{\br}[1]{\left({#1}\right)}
\newcommand{\inv}[1]{{\left(#1\right)}^{-1}}
\newcommand{\inve}[1]{{#1}^{-1}}
\newcommand{\diag}{\mathrm{di}}
\newcommand{\diagb}[1]{\mathrm{di}\left({#1}\right)}
\newcommand{\R}{\mathbb{R}}
\newcommand{\C}{\mathcal{C}}
\newcommand{\FHinf}[1]{q_\infty}
\newcommand{\FHtwo}[1]{q_2}
\newcommand{\FHone}[1]{q_1}
\newcommand{\E}{\mathcal{E}}
\newcommand{\V}{\mathcal{V}}
\newcommand{\Int}{\mathrm{Int}}
\newcommand{\Jac}[2]{\nabla {#1}\br{#2}}
\newcommand{\pmo}{\psi}
\newcommand{\vall}{\zeta}
\newcommand{\Sym}[1]{\mathrm{Sy}\br{#1}}
\newcommand{\Pa}{W}
\newcommand{\Ma}{A_\alpha}
\newcommand{\Mb}{b}
\title{\LARGE \bf
Natural Gas Flow Solutions with Guarantees: A Monotone Operator Theory Approach
}
\author{Krishnamurthy Dvijotham$^{1}$, Marc Vuffray$^{2}$, Sidhant Misra$^{2}$ and Michael Chertkov$^{2}$
\thanks{$^{1}$Krishnamurthy Dvijotham is with the Department of Computing and Mathematical Sciences, 
        California Institute of Technology, 1200 E California Blvd, Pasadena, USA
        {\tt\small dvij@cs.washington.edu}}%
\thanks{$^{2}$Marc Vuffra, Sidhant Mishra and Michael Chertkov are with the Center for Nonlinear Studies (CNLS) and T-Division, Los Alamos National Laboratory, Los Alamos, NM}
}
\begin{document}

\maketitle
\thispagestyle{empty}
\pagestyle{empty}

\begin{abstract}
We consider balanced flows in a natural gas transmission network and discuss computationally hard problems such as establishing if solution of the underlying nonlinear gas flow equations exists, if it is unique, and finding the solution. Particular topologies, e.g. trees, are known to be easy to solve based on a variational description of the gas flow equations, but these approaches do not generalize. In this paper, we show that the gas flow problem can be solved efficiently using the tools of monotone operator theory, provided that we look for solution within certain monotonicity domains. We characterize a family of monotonicity domains,  described in terms of Linear Matrix Inequalities (LMI) in the state variables, each containing at most one solution. We also develop an efficient algorithm to choose a particular monotonicity domain, for which the LMI based condition simplifies to a bound on the flows.  Performance of the technique is illustrated on exemplary gas networks. 
\end{abstract}

\section{Introduction}

Natural gas is the fastest growing component of the energy industry. In the USA this growth is primarily due to the hydro-fracking revolution \cite{fracking}, which has provided tremendous increase in supply, driving down prices and stimulating significant network expansion and creating new uses of the natural gas \cite{2010MITEI,2013MITEI}. The revolution also makes operating, controlling and optimizing the emerging gas systems more challenging. Local considerations which prevailed in the past are no longer sufficient to keep operations of the natural gas system economic and reliable. Developing new methods to analyze the state of the growing network efficiently and accurately has now become a top priority for gas system operators and other entities, such as power system operators with many gas fired turbine on balance, interested to monitor the status of the consumption/production, network flows and pressure profile along the long haul gas transmission networks.

Resolving many of the emerging computational challenges in operations and planning of the natural gas networks depends on solving the so called Gas Flow (GF) equations.  These equations describe spatio-temporal distribution of the mass flows and pressures in pipes over an arbitrary network subject to known profile of external parameters characterizing injection/consumption and compression, see e.g. \cite{osiadacz1987simulation,96WS,00WS}. The equations are nonlinear and difficult to analyze, even in the stationary (time-independent) regime discussed in this manuscript,  assuming overall balance of production and consumption,  where thus no accumulation or loss of gas in the system (no, so-called, line pack) takes place. Analyzing this system of the GF equations entails to establishing solution existence, finding the solution and establishing uniqueness, if possible. We are not aware of publications rigorously analyzing existence and uniqueness of solutions to the GF equations. Thus, a natural folklore conjecture would be that in its full generality the problems of existence, uniqueness and finding a solution are difficult,  i.e. the problems are of the complexity exponential in the size of the network. On the other hand, in a couple of special cases the problems were characterized as easy (of the polynomial complexity).  Special cases are known where the GF equations can be solved easily: For tree networks, the GF equations can be solved by dynamic programming \cite{68WL,90LP,13MFBBCP}. Another easy case  is the one of an arbitrary loopy network with no compression, or including only additive (and not multiplicative) compression,  when the problem of finding the balanced gas flow solution becomes equivalent to solving  a convex optimization problem, that of  minimizing cumulative losses in the gas pipes due to turbulent friction subject to flow conservation at any junction of the network \cite{77Nau,12BNV}.

In this manuscript, we apply the tools of monotone operator theory to analyze the GF equations. Just like a convex optimization problem is solvable in polynomial time, zeros of monotone operators can be found in polynomial time \cite{boydmonotone}. Thus, if we can show that the GF equations can be written in terms of a monotone operator, they can be solved efficiently. In this manuscript, we indeed show that GF operator is monotone over restricted \emph{monotonicity domains}. Each monotonicity domain can be expressed as a Linear Matric Inequality (LMI) in the system state (the pressures and flows in the network). The monotonicity domains are parameterized by a matrix-valued parameter. Within each monotonicity domain, there can be at most one solution of the GF equations and determining existence and computation of solutions can be done in polynomial time. We then show that for a particular choice of the monotonicity domain, the LMI condition can be replaced with a simple bound on the ratio of the maximum to minimum flow in the system. Furthermore, the parameters of this monotonicity domain can be computed efficiently by solving a quasi-convex optimization problem.

The rest of this paper is organized as follows. In Section \ref{sec:Background}, we introduce the GF equations and formulate them as a system of multivariate quadratic equations in the flow and potential variables. We also introduce the theory of monotone operators and solution of variational inequalities. In Section \ref{sec:MonGF}, we present our main technical results on the family of monotonicity domains of the gas flow equations. In Section \ref{sec:Num}, we evaluate our approach on some test networks. Finally, in Section \ref{sec:Con} we summarize our contributions and discuss directions for future work.

\section{Background and Notation}\label{sec:Background}

\subsection{Notation}
$\R$ is the set of real numbers. $\R^n$ denote the corresponding Euclidean space in $n$ dimensions.

Given a set $\C\subset\R^n$, $\Int\br{\C}$ denotes the interior of the set. $\norm{x}$ refers to the Euclidean norm of a vector $x\in \R^n$ and $\inner{x}{y}$ to the standard Euclidean dot product. $\diagb{x}$ is the diagonal matrix with with diagonal entries given by $x$. For $x,y\in\R^n$, $x*y$ denotes the vector $z$ with $z_i=x_iy_i$. 

Given a differentiable function $f:\R^k\mapsto\R^k$, $\nabla f$ denotes the Jacobian of $f$, a $k\times k$ matrix with the $i$-th row being the gradient of the $i$-th component of $f$.

For $M\in\R^{n\times n}$, $\Sym{M}=M+\tran{M}$. $\mathcal{S}^k$ denotes the space of $k\times k$ symmetric matrices.

\subsection{Modeling Gas Networks}

The network is specified by a set of buses (nodes) $\V$ and a set of gas pipelines (edges). Nodes are denoted by $i\in\V=\{0,1,\ldots,n\}$ and edges by $\br{i,j}\in\E$. Edges are directed and for any pair of nodes only one edge is present, $\br{i,j}\in\E$. The direction of edges has no physical significance (the flow of gas can be directed either way) and it is introduced for notational convenience. In some cases we will also denote edges by a single index $k$. Then, $k_1$ is the ``head'' of the edge and $k_2$ is the tail. Let $|\E|=m$.

In the steady-state, the gas network is characterized by pressures at every node and the gas flows over each edge. We denote by $\phi_{ij}$ the flow from $i$ to $j$ over edge $\br{i,j}\in\E$. $\phi_{ij}$ is real and it can be positive or negative. We write $i\rightarrow j$ if there $\br{i,j}\in\E$ and $j\rightarrow i$ if $\br{j,i}\in\E$.  We also denote by $\pi_i$ the squared pressure at node $i$.

The GF equations describe steady-state distribution of pressures and flows over the gas network. GF equations assume that the network is balanced, i.e. the total sum of injections and consumptions is zero. To guaranteer the balance one introduces a special node (usually big producer or consumer of gas, called the ``slack node'') which is flexible in its production/consumption. The slack node is denoted by $0$. The remaining nodes are labelled $1,\ldots,n$.  Under these assumptions the balanced GF equations  without compressors are given by \cite{osiadacz1987simulation,96WS,00WS}:
\begin{subequations}
\begin{align}
	\pi_i-\pi_j=\lambda_{ij}\phi_{ij} |\phi_{ij}|,\br{i,j}\in\E\label{eq:GFBasicA}\\
	q_i=\sum_{i\rightarrow j} \phi_{ij} - \sum_{j\rightarrow i} \phi_{ji},i\in\{1,\ldots,n\} \label{eq:GFBasicB}\\	\pi \geq 0
\end{align}	
\end{subequations}\label{eq:GFBasic}
where $\lambda_{ij}$ is a known characteristic of the pipe $\br{i,j}$ (constructed from the pipe length, diameter, friction factor and the media/gas sound velocity).

Define the $n \times m$ matrices $A,B,C$ with entries given by:
\begin{align*}
&	A_{ik} = \begin{cases}
 1 \text{ if } k_1=i \\
 -1 \text{ if } k_2=i \\
 0 \text{ otherwise } 	
 \end{cases},	B_{ik} = \begin{cases}
 1 \text{ if } k_1=i \\
 0 \text{ otherwise } 	
 \end{cases}\\
 & 	C_{ik} = \begin{cases}
 -1 \text{ if } k_2=i \\
 0 \text{ otherwise } 	
 \end{cases}
\end{align*}
Then, Eq.~(\ref{eq:GFBasicB}) can be written as $A\phi=q$.

To solve the GF Eqs.~(\ref{eq:GFBasicA},\ref{eq:GFBasicB}) means finding the squared pressures $\pi$, and gas flows, $\phi$, along the pipes of the network provided globally balanced injections/consumptions at all the nodes and pressure at a node (typically slack bus) are known. Note that the PF Eqs.~(\ref{eq:GFBasicA},\ref{eq:GFBasic}) may have no physical solution if the pressure set at the slack node is too low, i.e. when at least one $\pi_i$ with $i>0$ drops below $0$. To provide an additional pressure boost compressors are introduced. Depending on the local control scheme implemented, compressors may be of a multiplicative or additive type. Multiplicative compressor boosts the pressure locally by constant multiplicative factor in the direction of the flow and additive compressor (less standard) provides an additive boost along the direction of the flow.

We assume that the compressors are placed at some of the edges in the network, to boost the pressure and improve throughput. Compressors are directional. We assume (without loss of generality) that the orientation of the edge coincides with the compressor direction.

In the presence of multiplicative compression, the GF Eqs.~(\ref{eq:GFBasicA},\ref{eq:GFBasicB}) generalize to
\begin{subequations}
\begin{align}
&A\phi=q \\
&\alpha_{ij}\br{\pi_i-r_{ij}\lambda_{ij}\phi_{ij}|\phi_{ij}|} =\\
&\quad \pi_j+\br{1-r_{ij}}\lambda_{ij}\phi_{ij}|\phi_{ij}|\quad \forall \br{i,j}\in\E \\
&\pi \geq 0
\end{align}	
\end{subequations}\label{eq:GFComp}
where $r_{ij}$ is the relative position of the compressor along the edge $\br{i,j}$ and $\alpha_{ij}>1$ is the compression ratio. We assume here that the compressor boosts pressure in the direction from $i$ to $j$.

It is convenient,  for the purpose of further analysis, to restate Eqs.~(\ref{eq:GFComp}) as follows
\begin{subequations}
\begin{align}
&A\phi=q \label{eq:GFa}\\
&\alpha_{ij}\br{\pi_i-r_{ij}\lambda_{ij}\phi_{ij}\pmo_{ij}} =\pi_j+\br{1-r_{ij}}\lambda_{ij}\phi_{ij}\pmo_{ij}\label{eq:GFb}\\
&\pmo_{ij}^2=\phi_{ij}^2\label{eq:GFc}\\
&\pmo_{ij}\geq 0,\pi\geq 0\label{eq:GFd}
\end{align}	
\end{subequations}\label{eq:GF}
where $\pmo_{ij}$ is a newly introduced auxiliary variable.
To see that Eqs.~(\ref{eq:GFd}) are equivalent to \eqref{eq:GFComp} one observes that \eqref{eq:GFd},\eqref{eq:GFc} imply that $\pmo_{ij}=|\phi_{ij}|$.
Eqs.~(\ref{eq:GF}) constitute a set of $n+2m$ equations in $n+2m$ variables: $\pi\in\R^{n}, \phi\in\R^{m},\pmo\in\R^{m}$. We order the variables as $\vall=\br{\pi,\phi,\pmo}$, and then use $\vall_\pi,\vall_\phi,\vall_\pmo$ to refer to the corresponding blocks of $\vall$.
Eqs.~(\ref{eq:GF}) motivates the following definition.
\begin{definition}
The gas flow operator $F:\R^{n+2m}\mapsto 	\R^{n+2m}$ is defined as
\begin{subequations}
\begin{align}
F\br{\vall}=\begin{pmatrix}
A\phi-q \\
\Ma\pi-\Mb*\phi*\psi\\
\phi^2-\psi^2
\end{pmatrix}
\end{align}	
where $\Ma=\tranb{\diagb{\alpha}B-C},\Mb=\br{\alpha*r+(1-r)}*\lambda>0$.
\end{subequations}\label{eq:GFop}
Then the GF \eqref{eq:GF} become simply
\[F\br{\vall}=0\]
\end{definition}

\subsection{Monotone Operators}\label{sec:Mon}

We now review briefly the theory of monotone operators, as it is relevant to the approach developed in the manuscript. For details and proofs of the results quoted here, we refer the reader to the recent survey \cite{boydmonotone}.
A function $H:\R^k\mapsto\R^k$ is said to be a monotone operator over a domain $\C$ if
\[\inner{H\br{x}-H\br{y}}{x-y}\geq 0 \quad \forall x,y\in\C\]
A monotone operator is a generalization of a monotonically increasing function (indeed, if $k=1$, the above condition is equivalent to monotone increase: $x\geq y \implies H\br{x}\geq H\br{y}$).
$H$ is called strictly monotone if
\[\inner{H\br{x}-H\br{y}}{x-y}>0 \quad \forall x,y\in\C,x\neq y.\]
A common example of a monotone operator is the gradient of a differentiable convex function. 
\begin{definition}[Monotone Variational Inequality]
Let $\C\subset \R^k$ be a convex set and $H$ be a monotone operator over $\C$. The variational inequality (VI) problem associated with $H,\C$ is given by
\begin{align}
\text{ Find } x \in \C \text{ such that } \inner{H\br{x}}{y-x}\geq 0\quad \forall y\in\C\label{eq:VI}
\end{align}
\end{definition}

We quote the following results from the literature on variational inequalities relevant for what follows:
\begin{theorem}\label{thm:VI}
If $H$ is a monotone operator over a convex compact domain $\C$ and \eqref{eq:VI}. Then, the solution set of the variational inequality $\opt{X}$ is convex and non-empty. Further, an approximate solution $x_\epsilon$ satisfying
\begin{align}
 \norm{\opt{x}-x_\epsilon}\leq \epsilon \text{ for some } \opt{x}\in\opt{X}\label{eq:VIapprox}
\end{align}
can be found using at most $O\br{\frac{1}{\epsilon}}$ evaluations of $H$ and projections onto $\C$.
 \end{theorem}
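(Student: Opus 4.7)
The plan is to establish the three conclusions—non-emptiness, convexity of $\opt{X}$, and the $O\br{1/\epsilon}$ iteration bound—separately via standard tools from monotone operator theory; the full arguments are classical and are detailed in \cite{boydmonotone}, so I only sketch the structure.

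For non-emptiness I would invoke the fixed-point characterization of VI solutions: $\opt{x} \in \C$ solves \eqref{eq:VI} if and only if $\opt{x} = \Proj{\C}{\opt{x} - \alpha H\br{\opt{x}}}$ for any $\alpha > 0$. The map $x \mapsto \Proj{\C}{x - \alpha H\br{x}}$ is a continuous self-map of the compact convex set $\C$ (the metric projection is nonexpansive, and continuity of $H$ in finite dimensions follows from monotonicity on the interior of its domain), so Brouwer's fixed-point theorem produces a solution in $\C$.

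For convexity, the key step is Minty's lemma: when $H$ is monotone and continuous, the Stampacchia solution set $\opt{X}$ coincides with the Minty solution set $\Set{x \in \C : \inner{H\br{y}}{y-x} \geq 0 \text{ for all } y \in \C}$. The latter is an intersection of closed half-spaces (one per $y \in \C$) with the closed convex set $\C$, hence closed and convex; combined with the previous step it is also non-empty.

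For the iteration-complexity bound I would analyze an extragradient iteration (Korpelevich) or Nesterov's dual-averaging scheme. Each uses $O(1)$ evaluations of $H$ and $O(1)$ projections onto $\C$ per step, and monotonicity of $H$ yields a telescoping inequality that bounds the averaged-iterate gap function by $O\br{D^2/k}$ after $k$ steps, where $D$ is the diameter of $\C$, translating into the stated $O\br{1/\epsilon}$ count. The main obstacle—and the place where the statement requires some care—is converting the gap-function guarantee into the norm-proximity bound \eqref{eq:VIapprox}; this step is delicate without additional regularity (Lipschitz continuity of $H$ or strict/strong monotonicity), and in the gas-flow application it is made rigorous by the strict LMI-based monotonicity certificates developed in Section \ref{sec:MonGF}, which force $\opt{X}$ to be a singleton and give a quantitative strong monotonicity modulus in a neighborhood of it.
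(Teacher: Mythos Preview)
The paper does not prove this theorem at all: it is explicitly introduced as a result ``quote[d] from the literature on variational inequalities'' with a pointer to \cite{boydmonotone}, and no proof appears anywhere in the manuscript or appendix. So there is no paper proof to compare your proposal against. Your sketch is a reasonable outline of the classical Brouwer/Minty/extragradient route and is consistent with the same reference the paper defers to; the caveat you yourself flag about converting a gap-function bound into the norm bound \eqref{eq:VIapprox} is real, but the paper sidesteps the issue entirely by citing rather than proving.
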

\begin{remark}
In this manuscript, we will be interested in finding zeros of operators corresponding to the GF equations with multiplicative compression. It is easy to see that if there exists a point $\opt{x}\in\C$ with $H\br{\opt{x}}=0$, then this is a solution of the variational inequality. Conversely, if all solutions of the variational inequality are such that $H\br{\opt{x}}\neq 0$, then there is no solution of $H\br{x}=0$ with $x\in\C$.
\end{remark}

 \begin{theorem}\label{thm:VIcond}
Suppose $H$ is differentiable. Then $H$ is monotone over $\C$ if and only if
\[\Jac{H}{x}+\tran{\Jac{H}{x}}\succeq 0 \quad \forall x \in \C\] 	
 \end{theorem}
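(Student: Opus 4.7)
The plan is to prove the two directions separately using the fundamental theorem of calculus for the forward implication and a local Taylor expansion argument for the converse.

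For the direction asserting that a positive semidefinite symmetrized Jacobian implies monotonicity, I would fix arbitrary $x,y\in\C$ and parametrize the line segment $z\br{t}=y+t\br{x-y}$ for $t\in\sqb{0,1}$, which lies in $\C$ by convexity. The fundamental theorem of calculus applied componentwise gives $H\br{x}-H\br{y}=\int_0^1 \Jac{H}{z\br{t}}\br{x-y}\der t$. Taking the Euclidean inner product with $x-y$ and using that $v^T M v = \tfrac{1}{2}v^T \Sym{M} v$ for any square $M$ and vector $v$, I obtain
\begin{align*}
\inner{H\br{x}-H\br{y}}{x-y}=\tfrac{1}{2}\int_0^1 \tran{\br{x-y}}\br{\Jac{H}{z\br{t}}+\tran{\Jac{H}{z\br{t}}}}\br{x-y}\der t.
\end{align*}
Since the integrand is nonnegative at every $t$ by the hypothesis, the integral is nonnegative, establishing monotonicity.

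For the converse, I would argue by contrapositive. Suppose there exists $x_0\in\Int\br{\C}$ and a direction $v\in\R^k$ with $\tran{v}\br{\Jac{H}{x_0}+\tran{\Jac{H}{x_0}}}v<0$. Because $x_0$ is interior, there is $\epsilon>0$ such that $x_0\pm t v\in\C$ for all $t\in\sqb{0,\epsilon}$. Set $x=x_0+tv$ and $y=x_0$. A first-order Taylor expansion gives $H\br{x}-H\br{y}=t\Jac{H}{x_0}v+o\br{t}$, so
\begin{align*}
\inner{H\br{x}-H\br{y}}{x-y}=t^2\tran{v}\Jac{H}{x_0}v+o\br{t^2}=\tfrac{t^2}{2}\tran{v}\br{\Jac{H}{x_0}+\tran{\Jac{H}{x_0}}}v+o\br{t^2}.
\end{align*}
The leading term is strictly negative and dominates for small enough $t>0$, contradicting monotonicity of $H$ on $\C$. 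Extending the conclusion to boundary points of $\C$ then follows from continuity of $\Jac{H}{\cdot}$: any boundary point is a limit of interior points, and the set of positive semidefinite matrices is closed.

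The main obstacle is the converse direction, specifically handling points on the boundary of $\C$, where one cannot move in arbitrary directions and still remain in $\C$. My plan is to prove the pointwise PSD condition only on $\Int\br{\C}$ by the perturbation argument, and then use continuity of the Jacobian together with closedness of the PSD cone to extend to the closure. If $\Int\br{\C}$ is empty one can restrict to the affine hull of $\C$ and run the same argument in that lower-dimensional ambient space; this technicality does not affect our application where the monotonicity domains have nonempty interior.
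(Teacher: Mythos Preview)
The paper does not actually prove this theorem: it is stated in Section~\ref{sec:Mon} as a quoted result from the monotone operator literature (with reference to \cite{boydmonotone}), so there is no proof in the paper to compare against. Your argument is the standard one and is essentially correct; the forward direction via the line-integral representation is exactly how this is usually done, and your contrapositive for the converse is the natural local argument.

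One small technical remark on the converse: your extension from $\Int\br{\C}$ to the boundary invokes continuity of $\Jac{H}{\cdot}$, but the hypothesis is only that $H$ is differentiable, not that it is $C^1$. In full generality this step would require either strengthening the hypothesis to continuous differentiability or giving a slightly different argument at boundary points (e.g., using only directions that remain in $\C$ and exploiting convexity). For the purposes of this paper this is harmless, since the gas flow operator is polynomial and hence smooth, but you may want to flag the assumption explicitly.
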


\section{Characterization of Monotonicity Domains of the Gas Flow Operator}\label{sec:MonGF}

As discussed in the earlier Sections, the zeros of the GF operator correspond to solutions to the GF equations. Furthermore, zeros of monotone operators can be found efficiently. Thus, if we can prove that the GF operator is monotone, then, the GF equations can be solved efficiently. It turns out that the GF operator may not be globally monotone, but is monotone over restricted domains. We start by defining a monotonicity domain formally:

\begin{definition}[Monotonicity Domain]
A convex set $S\subset \R^{n+2m}$ is said to be a \emph{monotonicity domain} of the gas flow operator $F$ if there exist invertible matrices $\Pa$ such that the operator $F_{\Pa}\br{x}=\Pa F\br{ x}$ is monotone over the domain $x\in S$ and
\[\pi,\pmo\geq |phi|,A\phi=q \quad \forall \begin{pmatrix}\pi \\ \phi\\ \pmo \end{pmatrix}\in S\]
\end{definition}
The motivation behind this definition is that, as long as we are interested only in finding zeros of $F$, it does not matter where $F$ or $F_{\Pa}$ is monotone, since $F_{\Pa}\br{x}=0 \iff F\br{ x}=0$ so that zeros of $F$ and $F_{\Pa}$ are identical. The second condition simply ensures that the constraints $\pi,\pmo\geq 0,A\phi=q$ is satisfied at all points in $S$, since we require these constraints as part of the gas flow equations.

\begin{theorem}\label{thm:General}
For any invertible matrix $\Pa \in \R^{(n+2m	)\times (n+2m	)}$, the convex set defined by the following constraints is a monotonicity domain of $F$:
\begin{align}
&\pi\geq 0,\pmo\geq |\phi|,A\phi=q \\
& \Sym{\Pa\begin{pmatrix} 0 & A & 0 \\
-\inv{\diagb{\Mb}}\Ma & \diagb{\pmo} & \diagb{\phi}\\
0 & -\diagb{\phi} & \diagb{\pmo}\label{eq:GasLMI}
\end{pmatrix}}\succeq 0
\end{align}
We denote this set by $\C\br{\Pa}$. If there is a solution to the GF equations in $\C\br{\Pa}$, it must be a solution to the following monotone variational inequality:
\[\kappa\in \C\br{\Pa},\inner{F_{\Pa}\br{\kappa}}{x-\kappa}\geq 0 \quad \forall x\in\C\br{\Pa}\]
\end{theorem}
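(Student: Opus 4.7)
The plan is to reduce the LMI condition in the theorem to the Jacobian-based monotonicity criterion of Theorem \ref{thm:VIcond}, and then appeal to the remark following Theorem \ref{thm:VI} for the variational inequality statement.

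First, I would differentiate the gas flow operator block by block. The three row blocks of $F$ are $A\phi - q$, $\Ma\pi - \Mb*\phi*\pmo$, and $\phi^2 - \pmo^2$; a routine computation (treating elementwise products coordinatewise) yields
\begin{equation*}
\Jac{F}{\vall} = \begin{pmatrix} 0 & A & 0 \\ \Ma & -\diagb{\Mb*\pmo} & -\diagb{\Mb*\phi} \\ 0 & 2\diagb{\phi} & -2\diagb{\pmo} \end{pmatrix}.
\end{equation*}
Next, I would introduce the fixed block-diagonal matrix
\begin{equation*}
D = \begin{pmatrix} I_n & 0 & 0 \\ 0 & -\inv{\diagb{\Mb}} & 0 \\ 0 & 0 & -\tfrac{1}{2} I_m \end{pmatrix},
\end{equation*}
which is invertible since $\Mb>0$ componentwise. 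Using the identity $\inv{\diagb{\Mb}}\,\diagb{\Mb*v} = \diagb{v}$, one checks that $D\,\Jac{F}{\vall}$ equals exactly the matrix appearing inside $\Sym{\cdot}$ in \eqref{eq:GasLMI}.

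Now set $W' := \Pa D$. As a product of invertible matrices $W'$ is invertible, and the LMI \eqref{eq:GasLMI} reads $\Sym{W'\,\Jac{F}{\vall}} \succeq 0$. By Theorem \ref{thm:VIcond} this is precisely the condition for the preconditioned operator $F_{W'}(\vall) := W' F(\vall)$ to be monotone on $\C(\Pa)$. Convexity of $\C(\Pa)$ is immediate: the LMI is affine in $\vall$, the constraints $\pi \geq 0$ and $\pmo \geq |\phi|$ (i.e.\ $\pmo \geq \phi$ and $\pmo \geq -\phi$) are intersections of halfspaces, and $A\phi = q$ is an affine subspace. Since the structural requirements $\pi \geq 0$, $\pmo \geq |\phi|$, $A\phi = q$ are built into the definition of $\C(\Pa)$, the set qualifies as a monotonicity domain of $F$, witnessed by the preconditioner $W'$.

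For the variational inequality claim, any $\kappa \in \C(\Pa)$ with $F(\kappa) = 0$ satisfies $F_{\Pa}(\kappa) = \Pa F(\kappa) = 0$, hence $\inner{F_{\Pa}(\kappa)}{x-\kappa} = 0 \geq 0$ for every $x \in \C(\Pa)$; this is exactly the content of the remark after Theorem \ref{thm:VI}. The main obstacle is really just the bookkeeping of the first step: computing the Jacobian carefully (especially the cross-derivatives in the product $\phi * \pmo$) and then spotting the correct row-rescaling $D$ that brings $\Jac{F}{\vall}$ into the specific block pattern displayed in the LMI. Once $D$ is in hand, the rest is a mechanical invocation of Theorem \ref{thm:VIcond}.
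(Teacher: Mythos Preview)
Your proof is correct and follows the same approach as the paper: compute $\nabla F$, factor it as a fixed invertible block-diagonal matrix times the LMI matrix, and then apply the Jacobian criterion (Theorem~\ref{thm:VIcond}) to the preconditioned operator $W'F=\Pa D F$. In fact you are slightly more careful than the paper, which drops the factor $2$ in the third block row of $\nabla F$ and accordingly uses $-I$ rather than your $-\tfrac{1}{2}I$ in the diagonal scaling; your version is the correct one.
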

\begin{proof}
See appendix section \ref{sec:App}
\end{proof}
\begin{remark}
The VI stated here may not have a solution since $\C\br{\Pa}$ is not compact. However, for practical gas networks, there are always bounds on pressures $\pi$ and flow magnitudes $|\phi|$. One can add these bounds on $\pi,\phi,\psi$ to the definition of $\C\br{\Pa}$ to get a compact set before solving the VI. Also, even when the GF equations have a solution in $\C\br{\Pa}$, it is possible that the VI has multiple solutions (including the GF equation solution) and some of these may not be solution to the GF equations. Our numerical experience indicates that this is not a practical problem (ie we always find a GF solution), but we note the technical possibility for completeness. Further, in the next section, we resolve this by picking particular choices for $\Pa$ that eliminate this possibility.
\end{remark}

Theorem \ref{thm:General} is a general result that describes a family of monotonicity domains, parameterized by $\Pa$. However, there are two drawbacks of this approach:
\begin{itemize}
\item It is unclear how to choose $\Pa$. Ideally, we would like to choose $\Pa$ so that $\C\br{\Pa}$ is as large as possible and includes all the practically relevant solutions (solutions satisfying typical bounds on flows that one expects will hold in practice).
\item The constraint \eqref{eq:GasLMI} does not have a simple interpretation, so it is difficult to relate the constraints to typical operational constraints imposed on gas flows.
\end{itemize}

In Section \ref{sec:Explicit}, we describe how to choose $\Pa$ so that the LMI based condition \eqref{eq:GasLMI} simplifies to a condition involving only bounds on the flows.
\subsection{Selection of Monotonicity Domain}\label{sec:Explicit}

In this Section, we show that for particular choices of $\Pa$, the LMI condition \eqref{eq:GasLMI} simplifies significantly. Theorem \ref{thm:Tree} shows that if a condition relating the matrices $\Ma$ and $A$ is satisfied, then the GF operator is monotone for $\pmo\geq 0$.  The basic intuition is that since matrix in \eqref{eq:GasLMI} has constant terms $A,M$, these can be cancelled out to get $0$ by appropriate choice of $\Pa$. Similarly, one can cancel out the terms involving $\phi$ as well, so the overall condition reduces to a simple criterion on $\pmo$.

Our first result shows that for the special case of trees and systems with no compression, which are already known to be efficiently solvable \cite{12BNV}\cite{13MFBBCP}, the GF operator is monotone as long as $\pmo\geq 0$. Thus, any GF solution must lie within the monotonicity domain of the GF operator and can be efficiently found. Thus, our approach recovers the previously known results as special cases.

\begin{theorem}[Exactness for Trees and Systems with No Compression]\label{thm:Tree}
Suppose
\begin{align}
\tran{\Ma}\br{\tran{A}\inv{AA^T}A-I}=0\label{eq:alphaCond}
\end{align}
This is guaranteed if one of the following conditions is satisfied:
\begin{itemize}
	\item The network is a tree.
	\item There are no compressors, that is, $\alpha_k=1$ for all $k\in\E$
\end{itemize}
Then, with the choice
\[\Pa=\begin{pmatrix}
	\tran{\Ma}\tran{A}\inv{A\tran{A}} & 0 & 0 \\
	0      & \diagb{b} & 0 \\
	0      & 0     & \diagb{b}
\end{pmatrix}\]
$\Pa F\br{\vall}$ is a monotone operator over the domain $\psi\geq 0$. Let $\beta>0$ be any positive number. 
Define the domain:
\[\C_\beta=\{\vall:\vall_\pi\geq0, \vall_\pmo\geq \beta,A\vall_{\phi}=q\}\]
Let $\opt{\zeta}$ be any solution to the monotone VI:
\[\vall \in \C_{\beta},\inner{\Pa F\br{\vall}}{x-\vall}\geq 0 \quad \forall x\in\C_{\beta}\]
If $|\opt{\zeta}_\phi|=\opt{\zeta}_\pmo$ and there exists $\pi\geq 0$ satisfying \[\Ma\pi=b*\opt{\zeta}_\phi*\opt{\zeta}_\pmo\]
then $\br{\pi,\opt{\zeta}_\phi,\opt{\zeta}_\pmo}$ is a solution to the GF equations. If not, there are no solutions to the GF equations.
\end{theorem}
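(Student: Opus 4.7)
My plan is to verify the LMI of Theorem~\ref{thm:General} for the specific $\Pa$ stated here on the enlarged domain $\{\pmo\ge 0\}$, and then read off a GF solution from the resulting monotone VI on $\C_\beta$. First I verify \eqref{eq:alphaCond} in the two special cases. For a tree, the reduced incidence matrix $A$ is $n\times n$ and invertible, so $\tran{A}\inv{AA^T}A=I$ and the condition holds independently of $\Ma$. In the no-compression case, $\alpha\equiv 1$ forces $\Ma=\tran{A}$, and $\tran{\Ma}\br{\tran{A}\inv{AA^T}A-I}=A\tran{A}\inv{AA^T}A-A=0$. The algebraic content of \eqref{eq:alphaCond} is that $\tran{\Ma}v=0$ for every $v\in\ker(A)$, i.e.\ the columns of $\Ma$ lie in the range of $\tran{A}$; this is the cancellation that drives the rest of the argument.

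Next I compute $\Pa M$ with $M$ the matrix inside~\eqref{eq:GasLMI}. The block-diagonal shape of $\Pa$ makes this a block-by-block multiplication yielding
\[
\Pa M=\begin{pmatrix}
0 & \tran{\Ma}\tran{A}\inv{A\tran{A}}A & 0\\
-\Ma & \diagb{\Mb}\diagb{\pmo} & \diagb{\Mb}\diagb{\phi}\\
0 & -\diagb{\Mb}\diagb{\phi} & \diagb{\Mb}\diagb{\pmo}
\end{pmatrix}.
\]
Condition~\eqref{eq:alphaCond} collapses the $(1,2)$ block to $\tran{\Ma}$, and when I form $\Sym{\Pa M}$ the remaining off-diagonal pairs $(\tran{\Ma},-\Ma)$ and $(\diagb{\Mb}\diagb{\phi},-\diagb{\Mb}\diagb{\phi})$ cancel, leaving
\[
\Sym{\Pa M}=2\begin{pmatrix}0&0&0\\0&\diagb{\Mb}\diagb{\pmo}&0\\0&0&\diagb{\Mb}\diagb{\pmo}\end{pmatrix}\succeq 0
\]
whenever $\Mb>0$ and $\pmo\ge 0$. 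Since the LMI of Theorem~\ref{thm:General} is a pointwise condition on $\vall$, $\Pa F$ is monotone throughout $\{\vall:\pmo\ge 0\}$ and in particular over the convex set $\C_\beta$.

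The last step extracts a GF solution from $\opt{\zeta}$. After adjoining innocuous upper bounds on $\pi,\phi,\pmo$ to compactify $\C_\beta$, Theorem~\ref{thm:VI} produces $\opt{\zeta}\in\C_\beta$. When $\opt{\zeta}_\pmo>\beta$ componentwise, the tangent cone of $\C_\beta$ at $\opt{\zeta}$ contains arbitrary $e_\pmo\in\R^m$ and every $e_\phi\in\ker(A)$; feeding these into $\inner{\Pa F(\opt{\zeta})}{e}\ge 0$ forces the $\pmo$-block of $\Pa F(\opt{\zeta})$ to vanish, so after cancelling the positive factor $\Mb$ one obtains $(\opt{\zeta}_\phi)^2=(\opt{\zeta}_\pmo)^2$, hence $|\opt{\zeta}_\phi|=\opt{\zeta}_\pmo$, and forces the $\phi$-residual to lie in the range of $\tran{A}$. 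Under the hypothesised existence of $\pi\ge 0$ with $\Ma\pi=\Mb*\opt{\zeta}_\phi*\opt{\zeta}_\pmo$, the triple $(\pi,\opt{\zeta}_\phi,\opt{\zeta}_\pmo)$ satisfies every line of \eqref{eq:GFa}--\eqref{eq:GFd}; conversely, any GF solution with $|\phi_k|\ge\beta$ is a zero of $\Pa F$ and hence a VI solution, so failure of either hypothesis for $\opt{\zeta}$ forces $\C_\beta$ to contain no GF solution. The main subtlety I anticipate is precisely this converse: $\Sym{\Pa M}$ is only semidefinite (its $(1,1)$ block is identically zero), so the VI is not strictly monotone and may admit spurious solutions on the boundary $\opt{\zeta}_\pmo=\beta$; ruling these out requires choosing $\beta$ strictly below the smallest physically plausible flow magnitude so that the complementarity slackness used above is tight.
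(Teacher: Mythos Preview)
Your computation of $\Sym{\Pa M}$ and the verification of \eqref{eq:alphaCond} in the tree and no-compression cases are correct and coincide with the paper's argument; the cancellation you identify is exactly the mechanism used there.

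The gap is in your ``converse'' step. You write that any GF solution in $\C_\beta$ is a zero of $\Pa F$ and hence a VI solution, and conclude that failure of the hypotheses for a given VI solution $\opt{\zeta}$ rules out GF solutions. This inference is invalid unless you know that \emph{every} VI solution has the same $(\phi,\pmo)$ components as the GF solution; otherwise $\opt{\zeta}$ could be a spurious VI solution coexisting with a genuine GF solution elsewhere in $\C_\beta$. Your complementarity analysis shows only that an interior VI solution satisfies $|\opt{\zeta}_\phi|=\opt{\zeta}_\pmo$ and that its $\phi$-residual lies in $\mathrm{range}(\tran{A})$; this does not pin down $(\opt{\zeta}_\phi,\opt{\zeta}_\pmo)$ uniquely, and your diagnosis that the only danger is a boundary point $\opt{\zeta}_\pmo=\beta$ is off target---the degeneracy of $\Sym{\Pa M}$ in the $\pi$-block would in principle allow distinct interior VI solutions as well.

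The paper closes this gap with an argument you omit: on $\C_\beta$ the bottom $2m\times 2m$ block of $\Sym{\nabla F_\Pa}$ equals $2\,\diagb{b*\pmo}\oplus 2\,\diagb{b*\pmo}$, which is strictly positive definite since $\pmo\ge\beta>0$. Given two VI solutions $x,\bar{x}$ one has $\inner{F_\Pa(x)-F_\Pa(\bar{x})}{x-\bar{x}}\leq 0$ from the VI inequalities, and writing this as $\int_0^1(x-\bar{x})^T\Sym{\nabla F_\Pa(\tau x+(1-\tau)\bar{x})}(x-\bar{x})\,d\tau$ forces $x_\phi=\bar{x}_\phi$ and $x_\pmo=\bar{x}_\pmo$. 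Thus all VI solutions share their last $2m$ coordinates, and if a GF solution lies in $\C_\beta$ then every VI solution---in particular $\opt{\zeta}$---inherits its $(\phi,\pmo)$ and hence satisfies both hypotheses. This is the missing ingredient; with it, your complementarity discussion becomes unnecessary, and the ``if'' direction is immediate from the definitions.
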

\eqref{eq:alphaCond} is a very special condition that is only satisfied for specific choice of $\alpha$ for networks with general topologies. Hence the above theorem has limited applicability for general networks. The following theorem gives a way of choosing $\Pa$ such that any solution to the GF with certain upper and lower bounds on $\psi$ can be found efficiently.

\begin{theorem}[Flow bounds for general system]\label{thm:Special}
Let $\Pa^a\in\R^{n\times n},\Pa^b\in\R^{m\times m},\gamma$ be an optimal solution to the following quasi-convex optimization problem
\begin{align}
& \maxi \gamma \\
& \text{ Subject to } \\
&\begin{pmatrix}
\diagb{\eta} & \Pa^b-\diagb{\Pa^c} \\
\tranb{\Pa^b-\diagb{\Pa^c}} & \frac{1}{\gamma}\diagb{\Pa^c}	
\end{pmatrix}\succ 0 \\
&\Pa^a A=\tranb{\inve{\Pa^b\diagb{b}}\Ma} \\
&\Sym{\Pa^b}_{ii}-\eta_i\geq \gamma\br{\sum_{j\neq i}|\Sym{\Pa^b}_{jj}|}
\end{align}
Then, with the choice
\[\Pa=\begin{pmatrix}
	\Pa^a & 0 & 0 \\
	0      & \Pa^b & 0 \\
	0      & 0     & \diagb{\Pa^c}
\end{pmatrix}\]
For any $\beta>0$, $\Pa F\br{\vall}$ is a monotone operator over the domain $\C_{\beta,\gamma}$ given by:
\[\left\{\vall:\vall_\pi\geq 0,|\vall_\phi|\leq \vall_\pmo,,A\vall_{\phi}=q,\beta\leq \vall_\phi \leq \gamma\beta\right\} \]
Let $\opt{\zeta}$ be any solution to the monotone VI:
\begin{align}
\vall \in \C_{\beta,\gamma},\inner{\Pa F\br{\vall}}{x-\vall}\geq 0 \quad \forall x\in\C_{\beta,\gamma}\label{eq:GFVI}	
\end{align}
If $|\opt{\zeta}_\phi|=\opt{\zeta}_\pmo$ and there exists $\pi\geq 0$ satisfying \[\Ma\pi=b*\opt{\zeta}_\phi*\opt{\zeta}_\pmo\]
then $\br{\pi,\opt{\zeta}_\phi,\opt{\zeta}_\pmo}$ is the unique solution to the GF equations in $\C_{\beta,\gamma}$. If not, there are no solutions to the GF equations in $\C_{\beta,\gamma}$. 
\end{theorem}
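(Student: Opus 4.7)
The strategy is to invoke Theorem~\ref{thm:General}, so that it suffices to verify that the block-diagonal choice of $\Pa$ makes the LMI~\eqref{eq:GasLMI} hold pointwise on $\C_{\beta,\gamma}$. Uniqueness will then follow from strict monotonicity of $\Pa F$ (via the strict inequalities in the quasi-convex program), and the characterization of zeros of $F$ through VI solutions is the same extraction step already used at the end of Theorem~\ref{thm:Tree}.

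First I would compute $\Pa M(\vall)$ blockwise, where $M(\vall)$ denotes the matrix appearing inside $\Sym{\cdot}$ in~\eqref{eq:GasLMI}. The block-diagonal structure of $\Pa$ makes the $(1,1)$ and $(1,3)$ blocks of $\Sym{\Pa M}$ vanish identically, while the $(1,2)$ cross-block equals $\Pa^a A - \tranb{\Pa^b \inve{\diagb{b}} \Ma}$. The linear constraint $\Pa^a A = \tranb{\inve{\Pa^b \diagb{b}} \Ma}$ in the quasi-convex program is exactly the condition that kills this cross-block (this is the role of that constraint). Consequently the whole first block-row of $\Sym{\Pa M}$ is zero, and the LMI reduces to the $2\times 2$ block condition
\begin{equation*}
\begin{pmatrix} \Sym{\Pa^b \diagb{\pmo}} & (\Pa^b - \diagb{\Pa^c})\diagb{\phi} \\ \diagb{\phi}\tranb{\Pa^b - \diagb{\Pa^c}} & 2\diagb{\Pa^c * \pmo} \end{pmatrix} \succeq 0,
\end{equation*}
to be checked at every $\vall \in \C_{\beta,\gamma}$. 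I also use here that $\diagb{\phi}$ and $\diagb{\Pa^c}$ commute so that the $(2,3)$ block collapses to $(\Pa^b - \diagb{\Pa^c})\diagb{\phi}$.

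Next I would Schur-complement on the $(2,2)$ block, which is strictly positive on $\C_{\beta,\gamma}$ since the strict LMI of the quasi-convex program forces $\Pa^c>0$ componentwise and since $\pmo\geq|\phi|\geq \beta>0$ in the domain. This converts the condition into
\begin{equation*}
\Sym{\Pa^b \diagb{\pmo}} \succeq \tfrac12 (\Pa^b - \diagb{\Pa^c})\diagb{\phi^2/(\Pa^c * \pmo)}\tranb{\Pa^b - \diagb{\Pa^c}}.
\end{equation*}
The bounds $\phi\leq \gamma\beta$ and $\pmo\geq\beta$ on $\C_{\beta,\gamma}$ give $\phi^2/\pmo \leq \gamma\phi \leq \gamma\pmo$ entrywise, so it is enough to prove the cleaner inequality $\Sym{\Pa^b \diagb{\pmo}} \succeq \tfrac{\gamma}{2}(\Pa^b - \diagb{\Pa^c})\inve{\diagb{\Pa^c}}\tranb{\Pa^b - \diagb{\Pa^c}} \diagb{\pmo}$. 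The Schur complement of the quasi-convex LMI with respect to $\tfrac{1}{\gamma}\diagb{\Pa^c}$ yields $\diagb{\eta} \succ \gamma (\Pa^b - \diagb{\Pa^c})\inve{\diagb{\Pa^c}}\tranb{\Pa^b - \diagb{\Pa^c}}$, so it suffices to show $\Sym{\Pa^b \diagb{\pmo}} \succeq \diagb{\eta}\diagb{\pmo}$.

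The main obstacle is this last step, because $\Sym{\Pa^b \diagb{\pmo}}$ is not the same as $\Sym{\Pa^b}\diagb{\pmo}$ when $\Pa^b$ does not commute with $\diagb{\pmo}$. I would decompose $\Sym{\Pa^b}-\diagb{\eta}$ into a sum of $2\times 2$ Gershgorin-style PSD blocks using the diagonal-dominance inequality $\Sym{\Pa^b}_{ii}-\eta_i\geq \gamma\sum_{j\neq i}|\Sym{\Pa^b}_{ij}|$ (which I read as having the intended off-diagonal sum), and verify via a congruence with $\mathrm{diag}(\sqrt{\pmo_i},\sqrt{\pmo_j})$ that each such $2\times 2$ block remains PSD after weighting by $\diagb{\pmo}$; the multiplicative safety factor $\gamma$ is precisely what absorbs the mismatch between $\Sym{\Pa^b}\diagb{\pmo}$ and the asymmetric $\Sym{\Pa^b \diagb{\pmo}}$. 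Once that bookkeeping is settled, monotonicity of $\Pa F$ on $\C_{\beta,\gamma}$ is established, Theorem~\ref{thm:VI} gives solvability of the VI~\eqref{eq:GFVI}, strict monotonicity (from the strict inequalities in the quasi-convex program) forces uniqueness of any zero in $\C_{\beta,\gamma}$, and the same argument as in Theorem~\ref{thm:Tree} shows that the side conditions $|\opt{\zeta}_\phi|=\opt{\zeta}_\pmo$ and existence of $\pi\geq 0$ with $\Ma\pi = b * \opt{\zeta}_\phi * \opt{\zeta}_\pmo$ are exactly what promotes a VI solution to a GF solution.
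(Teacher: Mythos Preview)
Your proposal follows essentially the same route as the paper's proof: block-diagonal $\Pa$, use the linear constraint $\Pa^a A=\tranb{\inve{\Pa^b\diagb{b}}\Ma}$ to annihilate the first block-row of the symmetrized Jacobian, Schur-complement out the $(2,2)$ block, and then combine the LMI constraint with the diagonal-dominance inequality to control the remaining $\pmo$-dependent expression. The paper bounds $\phi^2/\pmo\le\pmo$ directly from $|\phi|\le\pmo$ rather than going through your $\phi^2/\pmo\le\gamma\pmo$, but the structure is the same, and the paper's own bookkeeping in this passage is equally informal. One small imprecision worth flagging: you appeal to ``strict monotonicity of $\Pa F$'' for uniqueness, but $\Pa F$ is \emph{not} strictly monotone in the $\pi$-coordinates (the $(1,1)$ block of the symmetrized Jacobian vanishes). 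The paper handles this exactly as in the proof of Theorem~\ref{thm:Tree}: only the bottom $2m\times 2m$ block is strictly positive definite, which forces any two VI solutions to agree in $(\phi,\pmo)$, after which $\pi$ is recovered separately from $\Ma\pi=b*\phi*\pmo$.
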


\section{Numerical Examples}\label{sec:Num}

We test our results on the simplest non-tree network: A Kite network shown in Fig.~\ref{fig:Kite}.

\begin{figure}[htb]
\begin{center}
\includegraphics[width=.9\columnwidth]{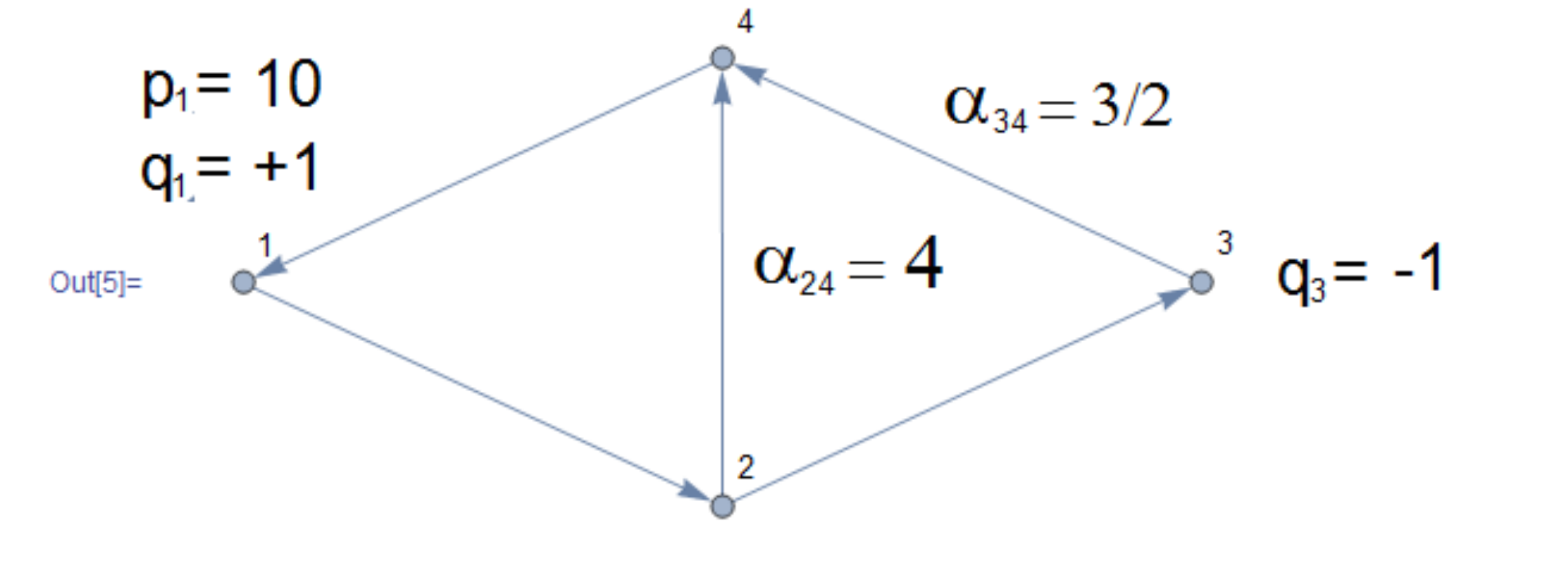}		
\caption{Kite Network}\label{fig:Kite}
\end{center}
\end{figure}

For this network, we find a value of $\gamma=54$.  We observe that for this injection configuration, the flows are well within the bounds imposed by $\gamma$.

\begin{figure}[htb]
\begin{center}
\includegraphics[width=.9\columnwidth]{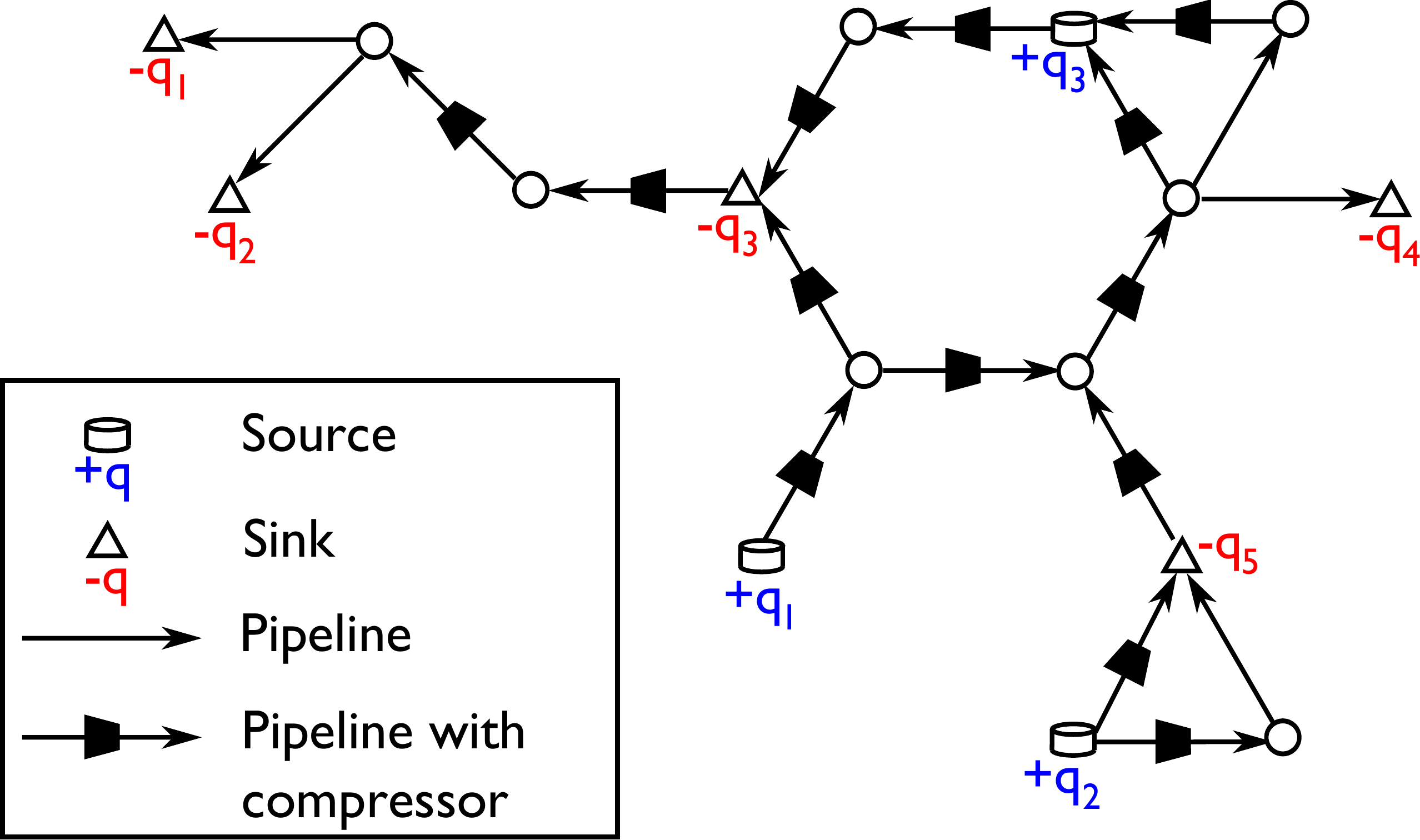}		
\caption{Synthetic 16 bus network}\label{fig:Toy}
\end{center}
\end{figure}

 We also tested our approach on a larger 16 bus network shown in Fig.~\ref{fig:Toy}. We consider three configurations of compressors in the network generated for three different patterns of injection and consumption at the sources and sinks respectively. For these three configurations of compression, we find the values of $\gamma$ to be $470,185,85$ respectively. We observe that all the numbers are well above the actual bounds on the GF. Thus, in all these cases, the GF equations can be solved by solving a monotone variational inequality \eqref{eq:GFVI}.

Numerically, it seems like the value of $\gamma$ reduces as the mean compression in the network increases. To study this effect, we scaled up the compression ratios uniformly and studied the change in $\gamma$. The results are illustrated in Fig.~(\ref{fig:GF}).

\begin{figure}[htb]
\begin{center}
\includegraphics[width=.9\columnwidth]{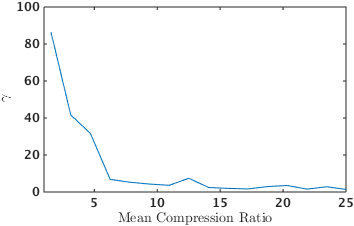}		
\caption{Effect of Compression Ratios on $\gamma$}\label{fig:GF}
\end{center}
\end{figure}

Fig.~(\ref{fig:GF}) shows that the value of $\gamma$ is fairly large (above $10$) for compression ratios of up to about $7$,  which is already well above possible practical value. 
Hence, we believe that, for all practical purposes, our approach can always solve the GF equations (or certify non-existence of practical solutions).

\section{Conclusion}\label{sec:Con}
As the interdependency between gas and power systems increases, it will become critical to have efficient and accurate tools for gas system operations. The steady state GF equations lie at the heart of many computations in the natural gas network  studies related to operations, control, optimization and planning. Therefore, to solve the GF equations in the practical cases of networks, is an important problem which,  to the best of authors' knowledge, was not given much of attention so far.

In this manuscript we remedy the problem and present a novel approach based on monotone operator theory to solve the GF equations. We characterized a whole family of monotonicity domains of the GF equations. Within each of these domains, determining existence of solutions and finding solutions to the GF equations is easy and can be done by solving a monotone variational inequality. Further, we show that by solving a quasi convex optimization problem, one can find a monotonicity domain that contains all GF solutions satisfying certain flow bounds. Numerical studies show that these bounds are sufficiently  tight to capture all solutions of practical interest. Thus, for all practical purposes, our approach can efficiently find (or prove non-existence of) solutions to the GF equations. In future work, we plan  to study if the monotone operator approach developed in the manuscript can be extended to prove that there is always a unique solution to the GF equations (this is indeed what is observed in our experiments).  We plan to focus in the future on developing efficient scalable algorithms to solve large GF problems using the monotone operator approach. Another direction for future analysis is to extend the monotone operator approach to more general dynamic case, of interest in short term (hours and beyond) operations. Finally, we also envision that our approach may be critical for boosting performance of more complex multi-level optimization and control problems where GF are embedded into low-level condition(s).

\section*{Acknowledgment}

The authors thank S. Backhaus for multiple discussions and advice.
The work at LANL was carried out under the auspices of the National Nuclear Security Administration of the U.S. Department of Energy at Los Alamos National Laboratory under Contract No. DE-AC52-06NA25396 and it was partially supported by DTRA Basic Research Project $\#10027-13399$. The authors also acknowledge partial support of the Advanced Grid Modeling Program in the US Department of Energy Office of Electricity.

\bibliographystyle{unsrt}

\bibliography{Ref}

\section{Appendix}\label{sec:App}
\subsection{Proof of theorem \ref{thm:General}}
\begin{proof}
The Jacobian of the gas flow operator $F$ is given by
\begin{align*}
&\begin{pmatrix}
0 & A & 0 \\
\Ma & -\diagb{b*\pmo}	&  -\diagb{b*\phi} \\
0 & \diagb{\phi}	&  -\diagb{\pmo} 
\end{pmatrix}
=	\\
&\begin{pmatrix}
I & 0 & 0 \\
0 & -\diagb{b}	&  0 \\
0 & 0	&  -I
\end{pmatrix}
\begin{pmatrix}
0 & A & 0 \\
-\inve{\diagb{b}}\Ma & \diagb{\pmo}	&  \diagb{\phi} \\
0 & -\diagb{\phi}	&  \diagb{\pmo} 
\end{pmatrix}
\end{align*}
Let $\vall\in\C\br{\Pa}$. It is easy to see that 
\begin{align*}
\Sym{\Pa\begin{pmatrix}
I & 0 & 0 \\
0 & -\inve{\diagb{b}}	&  0 \\
0 & 0	&  -I
\end{pmatrix}\nabla F\br{\vall} }\succeq 0
\end{align*}
Thus, the operator defined by 
\[G\br{x}=\Pa\begin{pmatrix}
I & 0 & 0 \\
0 & -\inve{\diagb{b}}	&  0 \\
0 & 0	&  -I
\end{pmatrix}F\br{ x}\]
satisfies $\Sym{\nabla G\br{x}}\succeq 0$.  Thus $G\br{x}$ is monotone over the domain $x\in\C\br{\Pa}$ and $\C\br{\Pa}$ is a monotonicity domain of $F$.
\end{proof}

\subsection{Proof of theorem \ref{thm:Tree}}
\begin{proof}
With this choice of $\Pa$, the monotonicity domain becomes condition \eqref{eq:GasLMI} becomes
\begin{align*}	
\Sym{\begin{pmatrix} 0 & \Pa^a A & 0 \\
-\Ma & \diagb{b*\pmo} & \diagb{b*\phi}\\
0 & -\diagb{b*\phi} & \diagb{b*\pmo}
\end{pmatrix}}\succeq 0 
\end{align*}	
The matrix evaluates to
\begin{align*}	
{\begin{pmatrix} 0 & \Pa^a A-\tran{\Ma} & 0 \\
\tran{\Pa^a A}-\Ma & 2\diagb{b*\pmo} & 0\\
0 & -0 & 2\diagb{b*\pmo}
\end{pmatrix}}\succeq 0 
\end{align*}	
If $\Pa^a A=\tran{\Ma}$, then this condition is always true when $\pmo\geq 0$. However, $\Pa^a A=\tran{\Ma}$ is an overdetermined equation in general since $\Pa^a\in\R^{n \times n}$ and the number of equations is $nm$ ($\Ma\geq n$ for a connected network). However, if a solution exists, it must also be a solution to 
\[\Pa^a A\tran{A}=\tran{\Ma}\tran{A}\implies \Pa^a = \tran{\Ma}\tran{A}\inv{A\tran{A}}\]
Plugging this back into the original equation, we get
\begin{align*}
&\tran{\Ma}\tran{A}\inv{A\tran{A}}A = \tran{\Ma} \\
&\implies \tran{\Ma}\br{\tran{A}\inv{A\tran{A}}A-I} = 0	
\end{align*}
This is exactly the condition of the theorem.

From the definition of $\Ma$, $\Ma=\tran{A}$ if $\alpha=1$, so that 
\begin{align*}
&\tran{\Ma}\br{\tran{A}\inv{A\tran{A}}A-I}=\\
& A\tran{A}\inv{A\tran{A}}A-A=A-A=0	
\end{align*}
 
If the network is a tree, $A$ is an invertible square matrix, so that 
\[\tran{A}\inv{A\tran{A}}A=\tran{A}\inv{\tran{A}}\inv{A}A=I\]
Hence the condition is satisfied in this case as well.

Uniqueness of solution to the VI: The key observation is that for $\pmo\in\C_\beta$, $F_{\Pa}$ is strictly monotone in its last $2m$ coordinates, since the bottom $2m \times 2m$ sub-block of $\Sym{\nabla F_{\Pa}}$ is positive definite. It is easy to see that if there are two solutions $x,\bar{x}$ to the VI, they must satisfy:
\[\inner{F_{\Pa}\br{x}-F_{\Pa}\br{\bar{x}}}{x-\bar{x}}= 0\]
Now, we can write
\begin{align*}
&F_{\Pa}\br{x}-F_{\Pa}\br{\bar{x}}=\int_{0}^1 \frac{d F\br{\tau x+(1-\tau)\bar{x}}}{d\tau}d\tau \\
& = 	\int_{0}^1 \nabla F\br{\tau x+(1-\tau)\bar{x}} \br{x-\bar{x}}d\tau \\
&\inner{F_{\Pa}\br{x}-F_{\Pa}\br{\bar{x}}}{x-\bar{x}} \\
&=\int_{0}^1 \tranb{x-\bar{x}}\Sym{\nabla F\br{\tau x+(1-\tau)\bar{x}}} \br{x-\bar{x}}d\tau
\end{align*}
Since the bottom $2m \times 2m$ block of $\Sym{\nabla F\br{\tau x+(1-\tau)\bar{x}}}$ is positive definite, the integral is non-zero unless $x_{\phi}=\bar{x}_{\phi},x_{\pmo}=\bar{x}_{\pmo}$. Thus, if there is a solution to the GF equations in $\C_{\beta}$ (which is also a solution to the VI), then given any solution $\opt{\zeta}$ of the VI, we must have $|\opt{\zeta}_\phi|=\opt{\zeta}_\pmo$ and there must exist $\pi\geq 0$ satisfying \[\Ma\pi=b*\opt{\zeta}_\phi*\opt{\zeta}_\pmo\]
Otherwise, there are no solutions to the GF equations in $\C_\beta$.
\end{proof}

\subsection{Proof of theorem \ref{thm:Special}}
\begin{proof}
For this choice $\Pa$, the condition \eqref{eq:GasLMI} 	evaluates to 
\begin{align}	
\Sym{\begin{pmatrix} 0 & \Pa^a A & 0 \\ 
-\Pa^b\inve{\diagb{b}}\Ma & \Pa^b\diagb{\pmo} & \Pa^b\diagb{\phi}\\
0 & -\diagb{\Pa^c*\phi} & \diagb{\Pa^c*\pmo}
\end{pmatrix}} \label{eq:Ref1}
\end{align}
being positive semidefinite. The above matrix evaluates to
\begin{align*}	
\begin{pmatrix} 0 &  Q & 0 \\
\tran{Q} & \Sym{\Pa^b\diagb{\pmo}} & \br{\Pa^b-\diagb{\Pa^c}}\diagb{\phi}\\
0 & \diagb{\phi}\tranb{\Pa^b-\diagb{\Pa^c}} & 2\diagb{\Pa^c*\pmo}
\end{pmatrix}
\end{align*}
where $Q=\Pa^a A-\tran{\Ma}\inve{\diagb{b}}\tran{\Pa^b}$. From the conditions of the theorem, we have $Q=0$. Hence, we only need to prove that
\[\begin{pmatrix}
\Sym{\Pa^b\diagb{\pmo}} & \br{\Pa^b-\diagb{\Pa^c}}\diagb{\phi}\\
 \diagb{\phi}\tranb{\Pa^b-\diagb{\Pa^c}} & 2\diagb{\Pa^c*\pmo}
\end{pmatrix}\succeq 0 \]
Since $\pmo> \beta$, this condition is equivalent to (using Schur complements)
\begin{align*}
&2\Sym{\Pa^b\diagb{\pmo}}\succeq \\
&\br{\Pa^b-\diagb{\Pa^c}}\diagb{\frac{\phi^2}{\Pa^c*\pmo}}\tranb{\Pa^b-\diagb{\Pa^c}}	
\end{align*}
Since $|\phi|\leq \pmo$, $\phi^2\leq \pmo$, so the above condition becomes 
\begin{align*}
&2\Sym{\Pa^b\diagb{\pmo}} \\
&\succeq\br{\Pa^b-\diagb{\Pa^c}}\diagb{\frac{\pmo}{\Pa^c}}\tranb{\Pa^b-\diagb{\Pa^c}}	
\end{align*}
This is true if 
\begin{align*}
& 2\Sym{\Pa^b\diagb{\pmo}}\\
&\succeq\br{\Pa^b-\diagb{\Pa^c}}\diagb{\frac{\beta}{\Pa^c}}\tranb{\Pa^b-\diagb{\Pa^c}}	
\end{align*}

A sufficient condition for the above matrix to be positive definite is
\begin{align*}
&2\diagb{\eta}\beta \succeq \\
&\br{\Pa^b-\diagb{\Pa^c}}\diagb{\frac{\gamma\beta}{\Pa^c}}\tranb{\Pa^b-\diagb{\Pa^c}}	 \\
&\br{\Sym{\Pa^b}}_{ii}-\eta_i\geq \gamma\br{\sum_{j\neq i}|\br{\Sym{\Pa^b}}_{ij}|}
\end{align*}
which is equivalent to
\begin{align*}
&\begin{pmatrix}
2\diagb{\eta} & \br{\Pa^b-\diagb{\Pa^c}} \\
\tranb{\Pa^b-\diagb{\Pa^c}} & \frac{1}{\gamma}\diagb{{\Pa^c}}  
\end{pmatrix}\succeq 0\\
&\br{\Sym{\Pa^b}}_{ii}-\eta_i\geq \gamma\br{\sum_{j\neq i}|\br{\Sym{\Pa^b}}_{ij}|}
\end{align*}
This is exactly the condition that is required of $\Pa$ in the theorem. 

Uniqueness follows from a similar argument as the theorem above.
\end{proof}

\end{document}